\numberwithin{equation}{section}
\newcounter{item}[section]
\newcounter{kirshr}
\newcounter{kirsha}
\newcounter{kirshb}
\newtheorem{theorem}{Theorem}[section]
\newtheorem{lemma}[theorem]{Lemma}
\newtheorem{corollary}[theorem]{Corollary}
\theoremstyle{definition}
\newtheorem{definition}[theorem]{Definition}
\def\(R)RA{{\bf (R)RA}}
\title{On Symmetrized Weight Compositions}
\author{Ali Assem\\\small{Department of Mathematics}\\\small{Faculty of Science}\\\small{Cairo University}\\\small{aassem@sci.cu.edu.eg}}
\begin{document}
\maketitle
\begin{abstract}A characterization of module alphabets with the Hamming weight EP (abbreviation  for Extension Property) had been settled. A thoughtfully constructed piece-of-art example by J.A.Wood (\cite{r6}) finished the tour. In 2009, in \cite{r7},  Frobenius bimodules were proved to satisfy the EP with respect to \emph{symmetrized weight compositions}. In \cite{r4}, the embeddability in the character group of the ambient ring $R$ was found sufficient for a module $_RA$ to satisfy the EP with respect to swc  built on any subgroup of $\mathrm{Aut}_R(A)$, while the necessity remained a question.

\setlength{\parindent}{0cm}Here, landing in a ``Midway'',  the necessity is proved by jumping to Hamming weight. Corollary \ref{cor2} declares a characterization of module alphabets satisfying the EP with respect to swc. 
 \end{abstract}
\textbf{Note:} All rings are finite with unity, and all modules are finite too. This may be re-emphasized in some statements. The convention for functions is that inputs are to the left.

\setcounter{section}{1}

\section*{Symmetrized Weight Compositions}

\setlength{\parindent}{0.4cm}

\begin{definition}(Symmetrized Weight Compositions) Let $G$ be a subgroup of the automorphism group of a finite $R$-module $A$. Define $\sim$ on $A$ by $a\sim b$ if $a=b\tau$ for some $\tau \in G$. Let $A/G$ denote the orbit space of this action. The symmetrized weight composition is a function

swc : $A^n \times A/G\rightarrow \mathbb{Q}$ defined by, $$\mathrm{swc}(x, a) = \mathrm{swc}_a(x) = |\{i:x_i\sim a\}|,$$where $x=(x_1,\ldots,x_n)\in A^n$ and $a\in A/G$.\end{definition}

\begin{definition} Let $G$ be a subgroup of $\mathrm{Aut}_R(A)$, a map $T$ ia called a $G$-monomial transformation of $ A^n$ if for any $(x_1,\ldots,x_n)\in A^n$ $$(x_1,\ldots,x_n)T=(x_{\sigma(1)}\tau_1,\ldots,x_{\sigma(n)}\tau_n),$$where $\sigma\in S_n$ and $\tau_i\in G$ for $i=1,\ldots,n$.\end{definition}

\begin{definition}(Extension Property)  The alphabet $A$ has the extension
property with respect to swc if for every $n$, and any two linear codes $ C_1, C_2\subset A^n $,  any
$R$-linear isomorphism $ f:C_1 \rightarrow C_2$  that preserves swc is extendable to a
$G$-monomial transformation of $ A^n$.\end{definition}

In \cite{r5}, J.A.Wood proved that Frobenius rings do have the extension property with respect to swc. Later, in \cite{r4}, it was shown that, more generally, a left $R$-module $A$ has the extension property with respect to swc if it can be embedded in the character group $\widehat{R}$ (given the standard module structure).

\begin{theorem}(Th.4.1.3, \cite{r3}) Let $A$ be a finite left $R$-module. If A can be embedded into  $\widehat{R}$, then $A$ has the extension
property with respect to the swc built on any subgroup $G$ of $\mathrm{Aut}_R(A)$. In particular, this theorem applies to Frobenius bimodules.\end{theorem}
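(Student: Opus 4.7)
The plan is to recast the extension question in terms of multisets of $R$-linear homomorphisms $C_1\to A$ and then to exploit the embedding $A\hookrightarrow\widehat R$ by a character-theoretic independence argument in the spirit of Wood. The inclusion $C_1\subseteq A^n$ furnishes coordinate projections $\lambda_i:C_1\to A$, $x\mapsto x_i$, and composing $f$ with the projections of $A^n\supseteq C_2$ produces $R$-linear maps $\mu_j:C_1\to A$, $x\mapsto (xf)_j$. Declare $\lambda,\lambda':C_1\to A$ to be $G$-equivalent if $\lambda'=\lambda\tau$ for some $\tau\in G$. Extending $f$ to a $G$-monomial transformation is precisely the task of producing a permutation $\sigma\in S_n$ and elements $\tau_j\in G$ with $\mu_j=\lambda_{\sigma(j)}\tau_j$; this in turn is equivalent to showing that the multisets $\{[\lambda_i]\}_{i=1}^n$ and $\{[\mu_j]\}_{j=1}^n$ of $G$-equivalence classes coincide.

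To convert the swc-preservation hypothesis into this language, write, for any orbit $[a]\in A/G$ with indicator $\mathbf{1}_{[a]}:A\to\mathbb{Q}$,
$$\mathrm{swc}_a(x)=\sum_{i=1}^n\mathbf{1}_{[a]}(\lambda_i(x))\qquad\text{and}\qquad\mathrm{swc}_a(xf)=\sum_{j=1}^n\mathbf{1}_{[a]}(\mu_j(x)),$$
so that preservation supplies
$$\sum_i\mathbf{1}_{[a]}(\lambda_i(x))=\sum_j\mathbf{1}_{[a]}(\mu_j(x))\quad\text{for every }x\in C_1,\ [a]\in A/G.$$
Since $\mathbf{1}_{[a]}$ is $G$-invariant, $\mathbf{1}_{[a]}\circ\lambda$ depends only on $[\lambda]$; regrouping, the above becomes, for each $[a]$, an equality between two $\mathbb{Z}$-linear combinations (with multiplicity coefficients from the original multisets) of the functions $x\mapsto\mathbf{1}_{[a]}(\lambda(x))$ indexed by the $G$-classes that appear.

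The crux is the $\mathbb{Q}$-linear independence of the functions $x\mapsto\mathbf{1}_{[a]}(\lambda(x))$ on $C_1$ as $([a],[\lambda])$ varies. Here the embedding $\phi:A\hookrightarrow\widehat R$ enters: for each orbit $[a]$ form the $G$-averaged character $\frac{1}{|G|}\sum_{\tau\in G}\phi(a\tau)$, and observe that these span, over $\mathbb{Q}$, the same space of $G$-invariant functions on $A$ as the orbit indicators—by Fourier inversion on the finite abelian group underlying $A$ together with the injectivity of $\phi$ and the finiteness of $G$. Passing to codewords, distinct $G$-equivalence classes $[\lambda]$ give rise to distinct class functions on $C_1$ via $\phi\circ\lambda$, and linear independence of characters (Artin) supplies the desired $\mathbb{Q}$-linear independence. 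Once this is in hand, the swc-equations force $\{[\lambda_i]\}=\{[\mu_j]\}$ as multisets, and $\sigma$ together with the $\tau_j$ are extracted by routine matching of representatives. The main obstacle is precisely this independence step: one has to verify that orbit-averaging commutes with pre-composition by $R$-homomorphisms—which hinges on $R$-linearity of the $G$-action—and that $G$-orbits in $\mathrm{Hom}_R(C_1,A)$ under post-composition are faithfully separated by the orbit-averaged characters, which is where the injectivity of $A\hookrightarrow\widehat R$ does the essential work.
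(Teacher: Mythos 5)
The paper itself offers no proof of this statement---it is quoted from \cite{r3} and \cite{r4}---so your attempt can only be measured against the published argument of ElGarem--Megahed--Wood. Your opening moves match that argument: passing to the coordinate functionals $\lambda_i,\mu_j:C_1\to A$, reducing extendability to equality of the multisets of $G$-equivalence classes, and expanding the orbit indicators character-theoretically. The gap is at the step you yourself call the crux. The functions $x\mapsto\mathbf{1}_{[a]}(\lambda(x))$, indexed by pairs $([a],[\lambda])$, are \emph{not} $\mathbb{Q}$-linearly independent on $C_1$, and Artin's linear independence of characters cannot be invoked for them because they are not characters: each decomposes into genuine characters of $C_1$ whose constituents are shared across different $\lambda$'s (the trivial character occurs in every one). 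Concretely, $\sum_{[a]\in A/G}\mathbf{1}_{[a]}(\lambda(x))=1$ for every $\lambda$ and every $x$, already a nontrivial dependence; worse, for the fixed orbit $[a]=[0]$ one has $\mathbf{1}_{[0]}\circ\lambda=\mathbf{1}_{\ker\lambda}$, so two non-$G$-equivalent maps with the same kernel give literally the same function for that orbit. Had such an independence statement been true, the hypothesis $A\hookrightarrow\widehat{R}$ would play no essential role---yet the whole point of the surrounding paper is that the conclusion fails for $A=\mathbb{M}_{m,k}(\mathbb{F}_q)$ with $k>m$.

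What the actual proof does instead is dualize the embedding: $A\hookrightarrow\widehat{R}$ makes $\widehat{A}$ a cyclic $R$-module with a generating character $\chi$, so every character of $A$ has the form $y\mapsto\chi(ry)$. Substituting this into the Fourier expansion of the orbit indicators converts the swc identities into an equality between two sums of honest characters of $C_1$, of the forms $\chi\circ(r\lambda_i\tau)$ and $\chi\circ(r\mu_j\tau)$. One then puts a divisibility preorder on $\mathrm{Hom}_R(C_1,A)$, chooses a $\lambda_{i_0}$ maximal among all the $\lambda_i$ and $\mu_j$, and uses linear independence of genuine characters plus positivity of the coefficient attached to the maximal term to extract some $\mu_{j_0}$ and $\tau\in G$ with $\mu_{j_0}=\lambda_{i_0}\tau$; this matched pair is cancelled and one inducts on $n$. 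Your outline is missing this maximality-and-induction mechanism, and without it the argument does not close.
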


We now define a new notion (the Midway!) on which we'll depend in the rest of this paper.
\begin{definition}(Annihilator Weight) On $_RA$, define an equivalence relation $\rho$ by $a\rho b$ if $\mathrm{Ann}_a=\mathrm{Ann}_b$, where $a$ and $b$ are any two elements in $A$ and $\mathrm{Ann}_a=\{r\in R{}|{} ra=0\}$ is the annihilator of $a$. Clearly, $\mathrm{Ann}_a$ is a left ideal.

Now, on $A^n$ we can define the annihilator weight $aw$ that counts the number of components in each orbit. \end{definition}

\textbf{Remark:} It is easily seen that  the EP  with respect to Hamming weight implies  the EP with respect to swc built on $\mathrm{Aut}_R(A)$, and the EP with respect to $aw$ as well.

\begin{lemma}\label{lemma1} Let $_RA$ be a pseudo-injective module. Then for any two elements $a$ and $b$ in $A$, $a\rho b$ if and only if $a\sim b$.\end{lemma}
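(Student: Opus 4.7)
The plan is to split the biconditional. The forward direction, $a \sim b \Rightarrow a \rho b$, will be immediate and needs nothing beyond $R$-linearity: if $a = b\tau$ for some $\tau \in \mathrm{Aut}_R(A)$, then $ra = (rb)\tau$ for every $r \in R$, and since $\tau$ is bijective, $ra = 0$ iff $rb = 0$, so $\mathrm{Ann}_a = \mathrm{Ann}_b$. This half uses neither pseudo-injectivity nor finiteness.

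For the substantive direction, assume $\mathrm{Ann}_a = \mathrm{Ann}_b$. First I would verify that the assignment $rb \mapsto ra$ is a well-defined $R$-linear isomorphism $\phi: Rb \to Ra$: well-definedness follows from $\mathrm{Ann}_b \subseteq \mathrm{Ann}_a$, and injectivity from the reverse inclusion, while $R$-linearity is built in. Viewing $\phi$ as an $R$-monomorphism from the submodule $Rb$ into $A$, pseudo-injectivity of $A$ then supplies an extension $\tilde{\phi} \in \mathrm{End}_R(A)$ with $\tilde{\phi}(b) = a$.

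The remaining and principal task is to promote $\tilde{\phi}$ to an $R$-automorphism $\tau$ of $A$ that still sends $b$ to $a$; since $A$ is finite, an endomorphism is an automorphism iff it is injective, so it suffices to produce an injective extension. My plan is to invoke Fitting's lemma to decompose $A = \ker\tilde{\phi}^{\,n} \oplus \mathrm{im}\,\tilde{\phi}^{\,n}$ for some $n$, noting that $\tilde{\phi}$ acts as an automorphism on the image summand and nilpotently on the kernel summand, and then to modify $\tilde{\phi}$ on the kernel summand so that the resulting map $\tau$ is invertible while the equality $\tau(b) = a$ is retained.

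I expect the real obstacle to be exactly this modification. A naive replacement of $\tilde{\phi}$ by the identity on the Fitting kernel fails, because the kernel-component $b_K$ of $b$ need not be fixed by $\tilde{\phi}$. What one really needs is an automorphism of $\ker \tilde{\phi}^{\,n}$ carrying $b_K$ to $\tilde{\phi}(b_K)$, and this is where the hypothesis $\mathrm{Ann}_a = \mathrm{Ann}_b$ has to be re-exploited --- either by propagating the annihilator equality to the Fitting summands and applying pseudo-injectivity inside the ambient $A$ a second time, or by a direct argument on the finite submodule $\ker \tilde{\phi}^{\,n}$. Executing this last step cleanly is where I expect the proof to really live.
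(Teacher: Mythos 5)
Your forward direction and your construction of the isomorphism $\phi\colon Rb\to Ra$, $rb\mapsto ra$, coincide with the paper's. The divergence, and the gap, is in the hard step of upgrading the extension to an automorphism. The paper does not pass through an arbitrary endomorphism extension at all: it invokes Proposition 5.1 of \cite{r7}, which says that for a \emph{finite} pseudo-injective module every monomorphism from a submodule into $A$ extends directly to an automorphism of $A$. That single citation carries the entire weight of the converse direction. Your proposal replaces it with a Fitting-lemma repair plan that you explicitly leave unexecuted, and the step you describe as ``where the proof really lives'' is precisely the step that is missing.

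Worse, the repair plan as sketched meets a concrete obstruction rather than a mere loose end. Write $b=b_K+b_I$ along $A=\ker\tilde{\phi}^{\,n}\oplus\mathrm{im}\,\tilde{\phi}^{\,n}$; since $\tilde{\phi}$ preserves both summands, $a=\tilde{\phi}(b_K)+\tilde{\phi}(b_I)$ with $\tilde{\phi}(b_K)$ in the kernel summand, on which $\tilde{\phi}$ is nilpotent. Injectivity of $\tilde{\phi}$ on $Rb$ controls only the sum of the two components, not each one separately, so $\tilde{\phi}$ restricted to $Rb_K$ need not be injective and one can have $\mathrm{Ann}_{\tilde{\phi}(b_K)}\supsetneq\mathrm{Ann}_{b_K}$; in that case no automorphism of the kernel summand (indeed no automorphism of anything) carries $b_K$ to $\tilde{\phi}(b_K)$. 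Nor does the hypothesis $\mathrm{Ann}_a=\mathrm{Ann}_b$ descend to the components: it only identifies the intersections $\mathrm{Ann}_{b_K}\cap\mathrm{Ann}_{b_I}$ and $\mathrm{Ann}_{\tilde{\phi}(b_K)}\cap\mathrm{Ann}_{\tilde{\phi}(b_I)}$. To close the argument you should either quote Wood's Proposition 5.1 as the paper does, or reprove it (the standard argument enlarges the domain of the monomorphism step by step using pseudo-injectivity and finiteness, rather than decomposing $A$ via Fitting's lemma).
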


\begin{proof}If $a\sim b$, this means $a=b\tau$ for some $\tau \in \mathrm{Aut}_R(A)$, and consequently $\mathrm{Ann}_a=\mathrm{Ann}_b$.

Conversely, if $a\rho b$, then we have (as left $R$-modules) $$Ra\cong{} _RR/\mathrm{Ann}_a={} _RR/\mathrm{Ann}_b\cong Rb,$$with $ra\mapsto r+\mathrm{Ann}_a\mapsto rb$. By Proposition 5.1 in \cite{r7}, since $A$ is pseudo-injective, the isomorphism $Ra\rightarrow Rb\subseteq A$ extends to an automorphism of $A$ taking $a$ to $b$.\end{proof}

\begin{corollary}\label{cor1}If $A$ is a pseudo-injective module, then the EP with respect to $swc$ built on $\mathrm{Aut}_R(A)$ is equivalent to the EP with respect to $aw$.\end{corollary}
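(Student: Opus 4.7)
The plan is to deduce the corollary almost directly from Lemma \ref{lemma1}. Set $G=\mathrm{Aut}_R(A)$ throughout.

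By Lemma \ref{lemma1}, pseudo-injectivity of $A$ forces the two equivalence relations $\sim$ (being in the same $G$-orbit) and $\rho$ (having the same annihilator) to coincide on $A$. Hence the orbit space $A/G$ and the quotient $A/\rho$ are identical partitions of $A$, and for every $x=(x_1,\ldots,x_n)\in A^n$ and every class $a$,
$$\mathrm{swc}_a(x)=|\{i:x_i\sim a\}|=|\{i:x_i\,\rho\,a\}|=\mathrm{aw}_a(x).$$
In other words, the swc built on $G$ and the annihilator weight aw are literally the same function on $A^n$ once $A/G$ is identified with $A/\rho$.

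Consequently, for any two linear codes $C_1,C_2\subseteq A^n$, an $R$-linear isomorphism $f:C_1\to C_2$ preserves swc if and only if it preserves aw. Since both versions of the EP demand extension of $f$ to a $G$-monomial transformation of $A^n$ with the same $G=\mathrm{Aut}_R(A)$ (this is the natural group for aw in the pseudo-injective case, as it is exactly the group whose orbits are the $\rho$-classes, again by Lemma \ref{lemma1}), the two extension properties share the same hypothesis and the same conclusion, and are therefore equivalent.

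There is essentially no technical obstacle to carrying this out: the substantive step is Lemma \ref{lemma1}, and the corollary is merely its translation to the level of weight functions on tuples and isomorphisms between codes. The only point worth making explicit in the write-up is that the target monomial group is the same on both sides, so that the equivalence of the EPs is a matching of identical statements rather than a reduction requiring any further argument.
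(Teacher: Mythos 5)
Your argument is correct and is exactly the reasoning the paper intends: the corollary is stated without proof precisely because Lemma \ref{lemma1} makes $\sim$ and $\rho$ coincide, so $\mathrm{swc}$ and $\mathrm{aw}$ become the same weight function and the two extension properties are literally the same statement. Your explicit remark that the target monomial group is $\mathrm{Aut}_R(A)$ on both sides is a worthwhile clarification, but it does not change the approach.
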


\begin{theorem}If $R$ is a chain ring, then for any pseudo-injective module $A$, the following are equivalent:
 \begin{enumerate}\item $A$ has the EP with respect to swc built on $\mathrm{Aut}_R(A).$\item $A$ has the EP with respect to Hamming weight.
  \end{enumerate}\end{theorem}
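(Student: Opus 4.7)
The plan is to handle the two implications separately. The direction $(2) \Rightarrow (1)$ is immediate from the Remark preceding Lemma~\ref{lemma1}, which records that EP with respect to Hamming weight already implies EP with respect to swc built on $\mathrm{Aut}_R(A)$ for any module alphabet. The entire content therefore lies in the implication $(1) \Rightarrow (2)$.

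For $(1) \Rightarrow (2)$, the strategy is to show that a Hamming-weight-preserving $R$-linear isomorphism $f : C_1 \to C_2$ is automatically $aw$-preserving, and then invoke Corollary~\ref{cor1} (applicable since $A$ is pseudo-injective) to translate the hypothesis (1) into the $aw$ EP and obtain the extension of $f$ to an $\mathrm{Aut}_R(A)$-monomial transformation. Because $R$ is a chain ring, the maximal ideal is principal, $\mathfrak{m} = (\pi)$, every left ideal is $\mathfrak{m}^j = (\pi^j)$ for some $j \in \{0, 1, \ldots, k\}$ (with $k$ the nilpotency index of $\pi$), and so for each $a \in A$ the annihilator $\mathrm{Ann}_a$ takes the form $\mathfrak{m}^{l(a)}$. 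Thus the $\rho$-classes in $A$ are indexed by $l \in \{0, 1, \ldots, k\}$, and for $c \in A^n$ the annihilator weight is captured by the tuple $N_l(c) := |\{i : l(c_i) = l\}|$.

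The key computation is that for every exponent $j$ one has $\pi^j c_i = 0 \iff \pi^j \in \mathfrak{m}^{l(c_i)} \iff l(c_i) \le j$, whence
$$\mathrm{wt}(\pi^j c) \;=\; |\{i : l(c_i) > j\}| \;=\; \sum_{l > j} N_l(c).$$
Since $f$ is $R$-linear, $\pi^j f(c) = f(\pi^j c)$, and since $f$ preserves Hamming weight, the two sides above agree when evaluated at $c$ and at $f(c)$. Hence all the tail sums $\sum_{l > j} N_l$ match for $c$ and $f(c)$, and by reverse induction on $j$ (starting with $j = k-1$, which directly isolates $N_k$) the individual values $N_l(c)$ and $N_l(f(c))$ agree for every $l$. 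This is $aw$-preservation.

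The main obstacle, and the only place the chain-ring hypothesis is used, is precisely this upgrade from Hamming-weight preservation to $aw$-preservation. It works because the linearly ordered ideals of $R$ are all powers of a single uniformizer $\pi$, so the weights $\mathrm{wt}(\pi^j c)$ probe the annihilator profile of $c$ coordinate-by-coordinate in a recursively separable way. Over a ring with a more complicated ideal lattice, the same scalar-multiplication trick would only recover partial aggregate information about $aw$, and a genuinely different argument would be needed.
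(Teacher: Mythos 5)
Your proposal is correct and follows essentially the same route as the paper: reduce (1) to the $aw$ EP via Corollary \ref{cor1}, then upgrade Hamming-weight preservation to $aw$-preservation by multiplying codewords by scalars that separate the linearly ordered ideals and peeling off the annihilator counts one ideal at a time. The only cosmetic difference is that you invoke the uniformizer description $\mathfrak{m}^j=(\pi^j)$ of the ideal chain, whereas the paper simply picks, for each ideal $I_j$ in the chain, an element of $I_j$ lying in no smaller ideal --- the same computation in slightly different clothing.
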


\begin{proof} First recall that the left ideals of a chain ring form a chain with inclusion, so we may assume that our chain is $I_0=\mathrm{rad}R\supset I_1\supset I_2\supset\ldots\supset I_m\supset(0)$.

By the remark above we know that (2) implies (1). For the converse, by Corollary \ref{cor1}, it's enough to show that if $A$ has the EP with respect to $aw$, then it has the EP with respect to Hamming weight.\\Suppose the EP with respect to $aw$ holds, and that $f:C\rightarrow A^n$ is a monomorphism preserving Hamming weight, let $(c_1,c_2,\ldots,c_n) \in C$ and $(c_1,c_2,\ldots,c_n)f=(b_1,b_2,\ldots,b_n)$.

Choose an element $x_0 \in I_0$ that doesn't belong to any of the smaller left ideals. Then, in the equality$$(x_0c_1,x_0c_2,\ldots,x_0c_n)f=(x_0b_1,x_0b_2,\ldots,x_0b_n),$$ the number of annihilated components is exactly the number of those $c_i$'s (and $b_i$'s)  with annihilator $I_0$. The preservation of Hamming weight then gives that this number is the same on each side. Repeating this process we get that $f$ preserves $aw$ and hence extends to a monomial transformation.
\end{proof}

\begin{theorem}Let $R$ be the ring $\mathbb{M}_m(\mathbb{F}_q)$ of square matrices of size $m\times m$ over a the finite field $\mathbb{F}_q$ , and let $A$ be a finite $R$-module. Then the EP with respect to swc built on $\mathrm{Aut}_R(A)$ holds if and only if the EP with respect to Hamming weight holds.\end{theorem}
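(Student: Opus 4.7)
The plan is to generalize the chain-ring argument by replacing the single element $x_0 \in I_0 \setminus I_1$ with a family of elements of $R$ whose kernels sweep out \emph{every} subspace of the unique simple left $R$-module $V = \mathbb{F}_q^m$, and then to recover annihilator weight from Hamming weight by M\"obius inversion on the subspace lattice of $V$. Since $R = \mathbb{M}_m(\mathbb{F}_q)$ is semisimple, every finite $R$-module is injective and in particular pseudo-injective; Corollary~\ref{cor1} thus reduces the theorem to showing that EP with respect to $aw$ implies EP with respect to Hamming weight.

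Write $A \cong V^k$, let $f\colon C \to A^n$ be a monomorphism preserving Hamming weight, and for each coordinate $c_i$ of a codeword let $W_{c_i}\subseteq V$ be the subspace spanned by its $k$ entries (and similarly $W_{b_i}$ for the image $cf=(b_1,\ldots,b_n)$). A quick check yields $\mathrm{Ann}_{c_i} = \{x\in R : W_{c_i}\subseteq \ker x\}$, so by Lemma~\ref{lemma1} the $\mathrm{Aut}_R(A)$-orbit of $c_i$ is determined precisely by $W_{c_i}$; the goal is therefore to prove that $n_W := |\{i : W_{c_i}=W\}|$ equals $m_W := |\{i : W_{b_i}=W\}|$ for every subspace $W\subseteq V$.

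To do this I would multiply codewords by arbitrary $x\in R$. Since $xc_i = 0 \iff W_{c_i}\subseteq \ker x$, and since every subspace $K$ of $V$ arises as $\ker x$ for some $x\in R$ (project along a complement), Hamming-weight preservation applied to $(xc_1,\ldots,xc_n)f = (xb_1,\ldots,xb_n)$ yields the cumulative identities
$$\sum_{W\subseteq K} n_W \;=\; \sum_{W\subseteq K} m_W \qquad\text{for every }K\subseteq V.$$
An induction on $\dim K$ (equivalently, M\"obius inversion on the subspace lattice of $V$) upgrades these to $n_W = m_W$ for every $W$, so $f$ preserves $aw$ and the $aw$-hypothesis extends it to a monomial transformation. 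The delicate step will be recognizing this one global M\"obius inversion as the correct substitute for the chain ring's step-by-step peeling; once the subspace-lattice viewpoint is adopted, the rest is a direct generalization of the chain-ring proof.
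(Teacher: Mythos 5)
Your proposal is correct and follows essentially the same route as the paper: both use injectivity of modules over the semisimple ring $\mathbb{M}_m(\mathbb{F}_q)$ to invoke Corollary \ref{cor1} and reduce to showing that a Hamming-weight-preserving monomorphism already preserves $aw$, and both establish this by multiplying codewords by ring elements that annihilate exactly those components whose annihilator contains a prescribed left ideal (equivalently, whose associated subspace of the simple module lies in a prescribed kernel) and then counting zero coordinates. Your M\"obius inversion on the subspace lattice of $V$ is a clean, systematic repackaging of the paper's greedy peeling, which proceeds from maximal annihilators downward using idempotent generators $e_I$ with $I=Re_I$ in place of your projections with prescribed kernel.
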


\begin{proof} The ``if'' part is direct. For the converse, notice that $A$ is now injective (being a matrix module), so, again, it's enough to prove that if $A$ has the EP with respect to $aw$, then it has the EP with respect to Hamming weight.\\Suppose the EP with respect to $aw$ holds, and that $f:C\rightarrow A^n$ is a monomorphism preserving Hamming weight.

Let's first focus on $R=\mathbb{M}_m(\mathbb{F}_q)$. $R$ is a finite simple ring (hence semisimple and left artinian), thereby, any left ideal $I$ has the form $Re_I$, where $e_I$ is an idempotent (Theorem ix.3.7, \cite{rw}).

It is clear, then, that any left ideal $I$ contains an element $e_I$ that does not belong to any other left ideal not containing $I$. Now, if \begin{equation}\label{eq1}(c_1,c_2,\ldots,c_n)f=(b_1,b_2,\ldots,b_n),\end{equation} choose, from $c_1,c_2,\ldots,c_n;b_1,b_2,\ldots,b_n$, a component with a maximal annihilator  $I$. Act on equation (\ref{eq1}) by $e_I$, then the only zero places  are those of the components in equation (\ref{eq1}) with annihilator $I$, the preservation of Hamming weight gives the preservation of $I$-annihilated components. Omit these components from the list $c_1,c_2,\ldots,c_n;b_1,b_2,\ldots,b_n$ and choose one with the new maximal, and repeat. This gives that $f$ preserves $aw$ and hence extends to a monomial transformation.
\end{proof}

\begin{corollary}Let $R$ be the ring $\mathbb{M}_m(\mathbb{F_q})$, and $A=\mathbb{M}_{m,k}(\mathbb{F_q})$ where $k>m\;$, then $A$ doesn't have the EP  with respect to swc.\end{corollary}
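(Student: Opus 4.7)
The plan is to reduce the failure of the swc EP to a failure of the Hamming weight EP via the preceding theorem, and then to invoke the known characterization of the latter. Since $R = \mathbb{M}_m(\mathbb{F}_q)$ is exactly the setting of that theorem, the swc EP (built on $\mathrm{Aut}_R(A)$) and the Hamming weight EP coincide for any $R$-module alphabet, so it suffices to exhibit a failure of the Hamming weight EP for $A = \mathbb{M}_{m,k}(\mathbb{F}_q)$ when $k > m$.

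For this, I would invoke the characterization of the Hamming weight EP referenced in the abstract (due to Wood): a finite left $R$-module $A$ satisfies the extension property with respect to the Hamming weight if and only if $A$ embeds as a left $R$-module in the character module $\widehat{R}$. Now $R = \mathbb{M}_m(\mathbb{F}_q)$ is simple, hence Frobenius, and so $\widehat{R} \cong {}_RR$ as left $R$-modules. The regular module ${}_RR$ splits into $m$ copies of the simple column module $S = \mathbb{F}_q^m$, while the alphabet $A = \mathbb{M}_{m,k}(\mathbb{F}_q)$ splits into $k$ copies of that same $S$. Because $k > m$, the $\mathbb{F}_q$-dimension comparison $\dim_{\mathbb{F}_q} A = mk > m^2 = \dim_{\mathbb{F}_q} R$ (equivalently, a count of simple summands) rules out any $R$-linear embedding $A \hookrightarrow \widehat{R}$. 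Hence $A$ fails the Hamming weight EP, and by the preceding theorem $A$ fails the swc EP as well.

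The only non-routine step is invoking Wood's characterization; the rest is an immediate dimension (or summand) count once $\widehat{R}$ has been identified with the regular module. I do not expect to need to construct an explicit pair of codes with an swc-preserving $R$-linear isomorphism that fails to extend to a monomial transformation; such a pair could in principle be extracted by chasing through the proof of Wood's characterization, but for the corollary the abstract obstruction suffices.
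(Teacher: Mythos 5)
Your proposal is correct, and its first step --- reducing the swc statement to the failure of the Hamming weight EP via the preceding theorem for $R=\mathbb{M}_m(\mathbb{F}_q)$ --- is exactly what the paper does. Where you diverge is in how the Hamming weight failure is justified. The paper simply cites Wood's result in \cite{r6}, where he constructs an explicit pair of linear codes in $\mathbb{M}_{m,k}(\mathbb{F}_q)^N$ (for $k>m$) together with a Hamming-weight-preserving isomorphism that does not extend to a monomial transformation; that direct construction is the entire content of the citation. You instead invoke the general characterization (Hamming EP $\iff$ $A\hookrightarrow\widehat{R}$ $\iff$ cyclic socle) and then make the multiplicity count: $\widehat{R}\cong{}_RR\cong S^m$ while $A\cong S^k$ with $k>m$, so no $R$-linear embedding exists. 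That count is correct, and as a formal derivation from a published theorem your argument is valid. The one caveat worth flagging is that it is expositionally backwards: the necessity direction of the characterization you invoke is itself \emph{proved} by reducing a non-cyclic socle to precisely this matrix-module counterexample, so your route ultimately rests on the same construction of Wood that the paper cites directly, just filtered through a more general theorem. What your version buys is that no explicit codes need to be exhibited at this point in the text; what the paper's version buys is a self-contained dependency on \cite{r6} alone, without presupposing the full characterization that this corollary feeds into (the paper uses the corollary as a stepping stone toward Corollary \ref{cor2}, which \emph{is} the characterization for swc, so leaning on the Hamming characterization here is harmless but slightly less clean).
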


\begin{proof} In \cite{r6}, J.A.Wood proved that $A$ doesn't have the Hamming weight EP, hence, by the above $A$  cannot have the EP with respect to swc. \end{proof}

\begin{corollary}\label{cor2}If $_RA$ be a finite module over a finite ring $R$ with unity, then $A$ has the extension property with respect to swc if and only if can be embedded in $\widehat{R}$ (or equivalently $A$ has a cyclic socle).\end{corollary}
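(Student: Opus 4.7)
The ``$\Leftarrow$'' direction is exactly Theorem~1.4 above (Th.~4.1.3 of \cite{r3}). For ``$\Rightarrow$'', my plan is to chain
\[
\text{swc EP}\ \Longrightarrow\ \text{aw EP}\ \Longrightarrow\ \text{Hamming EP}\ \Longrightarrow\ A\hookrightarrow\widehat{R},
\]
using the ``Midway'' announced in the abstract and closing with the classical Hamming weight characterization (\cite{r6}).

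First I would use the $n=1$ instance of swc EP to put $A$ into the pseudo-injective regime of Lemma~\ref{lemma1}: any $R$-isomorphism between cyclic submodules $Ra,Rb\subseteq A$ trivially preserves swc, hence the EP extends it to a $G$-monomial transformation of $A^{1}$, i.e.\ to an automorphism of $A$ sending $a\mapsto b$; this is exactly pseudo-injectivity. Lemma~\ref{lemma1} and Corollary~\ref{cor1} then rephrase swc EP as aw EP. Next, to pass from aw EP to Hamming EP in full generality, I would extend the witness-element arguments of Theorems 1.7 and 1.8. Given $(c_1,\ldots,c_n)f=(b_1,\ldots,b_n)$ with $f$ Hamming-preserving, $R$-linearity makes the marginal count $\#\{i:rc_i=0\}$ preserved for every $r\in R$. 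A M\"obius / inclusion--exclusion over the finite poset of annihilators appearing among the components, processed from the top ideal $R$ downwards, then recovers the exact annihilator-class counts. So $f$ preserves aw and, by the aw EP, extends to a monomial transformation. The classical Hamming characterization finally yields $A\hookrightarrow\widehat{R}$, equivalently that $A$ has cyclic socle.

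The main obstacle is the middle arrow in full generality. In the chain and matrix ring cases of Theorems 1.7 and 1.8 one finds a single witness $e_I\in I$ whose zeros under left multiplication are exactly the components with annihilator $\supseteq I$; for a general finite ring no single element need separate $I$ from the smaller ideals occurring in $A$, so several multiplications must be combined. The technical core of the argument is therefore to show that the marginal counts $\#\{i:rc_i=0\}$ as $r$ ranges over $R$ determine the joint annihilator-class counts --- equivalently, that the incidence matrix of the relation ``$r\in I$'' has trivial left kernel on the count-discrepancies $\Delta_I$ over the finite lattice of annihilators in $A$. This is the step where I would invest most of the effort, likely by descending the radical filtration of $R$ and exploiting a complete set of primitive idempotents modulo the radical to trim the problem down, one socle layer at a time, to the cases already handled.
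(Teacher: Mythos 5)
Your plan takes a genuinely different route from the paper's, and two of its steps have real gaps. First, the $n=1$ bootstrap to pseudo-injectivity is circular: the canonical isomorphism $Ra\to Rb$, $ra\mapsto rb$, preserves swc only if $ra$ and $rb$ lie in the same $\mathrm{Aut}_R(A)$-orbit for \emph{every} $r\in R$, and that is not ``trivial'' --- it is essentially the conclusion that Lemma~\ref{lemma1} extracts \emph{from} pseudo-injectivity. Without swc-preservation you cannot invoke the EP on this map, so you have not shown that the swc EP puts $A$ into the regime of Lemma~\ref{lemma1} and Corollary~\ref{cor1}. Second, and more seriously, the ``technical core'' you identify is false for a general finite ring: the characteristic functions $\chi_I$ of distinct left ideals need not be linearly independent as functions on $R$, so the incidence matrix of ``$r\in I$'' can have a nontrivial left kernel. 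For example, in $R=\mathbb{F}_2[x,y]/(x,y)^2$ the radical $\mathfrak{m}$ is a Klein four-group whose three order-two subgroups $I_1,I_2,I_3$ are left ideals, and $\chi_{I_1}+\chi_{I_2}+\chi_{I_3}=\chi_{\mathfrak{m}}+2\,\chi_{\{0\}}$ on all of $R$; all five ideals occur as annihilators in a suitable $A$ (e.g.\ a direct sum of the corresponding cyclic modules), so the marginal counts $\#\{i:rc_i=0\}$ cannot determine the annihilator-class counts by any M\"obius or descent argument. This is precisely why the paper establishes the aw-to-Hamming transfer only for chain rings and for matrix rings, where a single separating element $x_0$, respectively idempotent $e_I$, exists.

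The paper's intended derivation of the ``only if'' direction is by contraposition rather than by your direct chain: if $A$ does not embed in $\widehat{R}$, its socle is not cyclic, and by the construction of \cite{r6} the socle contains a pulled-back copy of $\mathbb{M}_{m,k}(\mathbb{F}_q)$, $k>m$, over a matrix-ring factor of $R$ modulo its radical; the corollary on $\mathbb{M}_{m,k}(\mathbb{F}_q)$ (which rests on the matrix-ring theorem, i.e.\ exactly the case where your separating-element argument does work because the ring is semisimple and $A$ is injective) shows that this configuration already defeats the swc EP. Your direct chain would, if completed, prove the stronger statement that the swc, aw, and Hamming EPs coincide for every alphabet via weight-preservation alone, but as proposed it founders on the two points above; to salvage it you would need either to restrict to the ring classes actually treated in the paper or to find a genuinely new way to recover aw from Hamming data when the ideal lattice has dependent indicators.
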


 \end{document}